\title{Automated Verification of Integer Overflow}
\author[1]{Asankhaya Sharma}
\affil[1]{SourceClear}
\begin{abstract}
Integer overflow accounts for one of the major source of bugs
in software. Verification systems typically assume a well defined underlying
semantics for various integer operations and do not explicitly 
check for integer overflow in programs. In this paper we present a specification 
mechanism for expressing integer overflow. We develop an
automated procedure for 
integer overflow checking during program verification.
We have implemented a prototype integer overflow checker and tested it
on a benchmark consisting of already verified programs (over 14k LOC).
We have found 43 bugs in these programs due to integer
overflow.
\end{abstract}
\renewcommand*{\cite}{\citep}
\newcommand{\infc}{\ensuremath{{\infty}}}
\newcommand{\smallsep}{\scriptsize \sep}
\newcommand{\conjstar}{%
\mathrel{\ooalign{$\wedge$\kern-5.5pt\smallsep}}}
\newcommand{\sepc}{%
\mathrel{\ooalign{\ensuremath{\fullmoon}$\kern-6.5pt\smallsep$}}}
\newcommand{\rn}{\ensuremath{\Psi}}
\newcommand{\mustF}{\ensuremath{{\mho}_{\scriptsize\it ioc}}}
\newcommand{\sepnode}[3]{\ensuremath{#1{\mapsto}#2(#3)}}
\newcommand{\seppred}[2]{\ensuremath{#1(#2)}}
\newcommand{\self}{\btt{root}}
\newcommand{\report}[1]{ }
\newcommand{\acm}[1]{ }
\newcommand{\hide}[1]{}
\newcommand{\hideie}[1]{}
\newcommand{\nil}{\btt{null}}
\newcommand{\emp}{\btt{emp}}
\newcommand{\veq}{\ensuremath{\equiv}}
\newcommand{\pure}{\ensuremath{\pi}}
\newcommand{\heap}{\ensuremath{\kappa}}
\newcommand{\constr}{\ensuremath{\Phi}}
\newcommand{\entailK}[5]{\ensuremath{#3{\vdash}^{#1}_{#2}#4\,{\sep}\,#5}}
\newcommand{\entailVV}[3]{\entailK{\heap}{V,I}{#1}{#2}{#3}}
\newcommand{\myit}[1]{\textit{#1}}
\newcommand{\view}[2]{\ensuremath{\btt{#1}{\langle}{#2}{\rangle}}}
\def\sep{\code{*}}
\def\primeV{\myit{prime}}
\newcommand{\rulen}[1]{\ensuremath{{\bf \scriptstyle [#1]}}}
\newcommand{\verirulen}[1]{[\underline{{\bf \scriptstyle FV-}\rulen{#1}}]}
\newcommand{\hformn}[3]{\ensuremath{{\btt{#1}}{\mapsto}{\btt{#2}}{\langle}{\btt{#3}}{\rangle}}}
\newcommand{\hformp}[3]{\ensuremath{{\btt{#2}}{\langle}{\btt{#1},\btt{#3}}{\rangle}}}
\newcommand{\arrimp}{\ensuremath{\,\btt{*}\!\!\!\rightarrow}\,}
\def\D{\Delta}
\def\where{\btt{where}}
\def\bool{\code{bool}}
\def\int{\code{int}}
\def\uint{\code{uint}}
\def\float{\code{float}}
\def\void{\code{void}}
\def\pre{\constr_{\myit{pr}}}
\def\post{\constr_{\myit{po}}}
\def\true{\code{true}}
\def\false{\code{false}}
\def\a{a}
\newcommand{\myif}[3]{\code{if}~#1~\code{then}~#2~\code{else}~#3}
\newcommand{\mywhile}[2]{\code{while}~#1~\code{do}~#2}
\newcommand{\code}[1]{{{\ensuremath{\tt #1}}}}
\newcommand{\sm}[1]{{\small \mbox{$#1$}}}
\newcommand{\btt}[1]{{\ensuremath{\tt #1}}}
\newcommand{\passref}{\btt{ref}}
\newcommand{\ensures}{\btt{ensures}}
\newcommand{\requires}{\btt{requires}}
\newtheorem{thm}{Theorem}
\newtheorem{lemma}[thm]{Lemma}
\def\nochange{\myit{nochange}}
\def\sep{\ensuremath{*}}
\newcommand{\hlr}[3]{\ensuremath{\frac{\begin{array}{c}\verirulen{#1}\\[0.5ex]%[\underline{\rulen{#1}}]\\[0.5ex]
#2\end{array}}{#3}}}
\newcommand{\htriple}[3]{\ensuremath{\vdash \{#1\}\,#2\,\{#3\}}}
\def\inv{\myit{inv}}
\renewcommand{\view}[2]{\ensuremath{#1{\langle}{#2}{\rangle}}}
\newcommand{\atom}{\alpha}
\begin{document}

\flushbottom
\maketitle
\thispagestyle{empty}

\section*{Introduction} \label{sec.intro}
Numerical errors in software are quite common and yet are ignored by most verifications systems.
Integer overflow in particular has been among the top 25 software bugs \cite{Chirstey:SANS11}.
These errors can lead to serious failures and exploitable vulnerabilities.
Program verification helps to build reliable software by ensuring that relevant properties of 
can be validated. In Hoare logic style program verification we typically specify programs using
pre and post conditions. The verifier assumes an underlying well defined semantics and generates 
proofs of correctness of program. Most verification systems do not check for errors due to
undefined behaviors in programs. In C/C++ the integer operations may lead to undefined behaviors
as specified by the standard. Undefined behaviors in the C/C++ specification lead to confusion among
programmers.
Moreover many programmers expect wrap around behavior for integer overflow
\cite{Dietz:ICSE12}
and may intentionally write code that leads to such overflows.
The code written with undefined (in the language specification) intentional overflows
is not guaranteed to be portable and the behavior may depend on the optimizations used in various compilers.

It is no surprise that automated verifiers typically assume a well defined semantics for various integer 
operations. However in order to increase the completeness of verification it is desirable to specify and 
verify integer overflow. The starting point of this work is the HIP/SLEEK verification system 
\cite{OOPSLA11:Chin} based on separation logic. HIP/SLEEK system
can do automated verification of shape, size and bag properties of programs \cite{SCP12:Chin}.
We extend the domain of integers (extended number line) with two 
logical infinite constants \code{\infc} and \code{-\infc}, corresponding to positive
and negative infinity respectively. Even though this kind extension of integers
is common in libraries for programming languages for portability reasons, it is eventually typically mapped
to a fixed value for a particular underlying architecture (32-bit or 64-bit).
In a verification setting we find it better to enrich the underlying specification logic with
these constants and reason with them automatically during entailment. This mechanism
allows us to specify intentional and unintentional integer overflow
%\wnsay{I think the use of intensional overflow technique is
%quite new, and should perhaps be the focus of the paper. I think
%overflow checking is nothing new; but it is good if we have 
%a predicated approach towards capturing safe values. Also, the
%link between inf and maxInt needs to be better clarified.}
in programs.
In particular our key contributions are

\begin{itemize}
 \item A specification mechanism for integer overflows using logical infinities
 \item Entailment procedure for handling logical infinities
%\wnsay{maybe this novelty should be better stressed. It can
%be used to reason about infinite and arbitrary overflow at the same
%time! thus killing two birds with one stone. Also, why did u not mention about
%intentional overflow technique?}
 \item Integrated integer overflow checking with automated verification
 \item A prototype implementation of an Integer overflow checker
 \item Finding 43 integer overflow bugs in existing benchmark of verified programs
\end{itemize}

The rest of the paper is structured as follows.
In the next section, we motivate our approach with a few
examples. Then we present our specification language with
logical infinities. This specification language is used to describe automated verification with
integer overflow checking. We also formulate some soundness properties of our system.
In the experiments section we present our implementation with
a benchmark of already verified programs. We describe some related
work and finally we conclude in the last section.

%\todo{ \ldots }

\section*{Motivating Examples}\label{sec:motivate}
We illustrate the integration of integer overflow checking in a verification system by means of few examples.
The following function increments the value passed to it.

\[
 \begin{array}{l}
  \code{void~ex1(int~n)}\\
  \quad \code{requires~n{\geq}0}\\
  \quad \code{ensures~res{=}n{+}1;}\\
  \quad \code{\{}\\
  \quad \quad \code{return~n+1;}\\
  \quad \code{\}}
 \end{array}
\]

If the value of \code{n} that is passed to this function is the maximum value that can be 
represented by the underlying architecture this program will lead to an integer overflow.
In order to avoid dealing with absolute values of maximum and minimum integers we introduce
a logical constant \code{\infc} in the specification language.
With this constant it is possible to write the specification for the function to avoid
integer overflow.

\[
 \begin{array}{l}
  \code{void~ex2(int~n)}\\
  \quad \code{requires~0{\leq}n+1{<}\infc}\\
  \quad \code{ensures~res{=}n{+}1;}\\
  \quad \code{\{}\\
  \quad \quad \code{return~n+1;}\\
  \quad \code{\}}
 \end{array}
\]

Another benefit of adding this constant to the specification language is that it allows
users to specify intentional integer overflow. A recent study \cite{Dietz:ICSE12} has found
intentional integer overflow occurs frequently in real codes. 
%\wnsay{I thought the idea of intensional overflow is to
%represent its outcome by modulo arithmetic?
%For example, we may use x+y-inf as intentional overflow
%where \code{-inf<x,y{\leq}inf \wedge x+y>inf}}
We allow a user to express integer overflow using \code{\infc} constant where
such behavior is well defined. The following example shows how to specify intentional overflow.

\[
 \begin{array}{l}
  \code{void~ex3(int~n)}\\
  \quad \code{requires~n{\geq}0}\\
  \quad \code{ensures~(n {+} 1 {<} \infc {\wedge} res{=}n{+}1)}\\
  \quad \code{\qquad \qquad \vee~ (n {+} 1 {\geq} \infc {\wedge} (\true) \mustF );}\\
  \quad \code{\{}\\
  \quad \quad \code{return~n+1;}\\
  \quad \code{\}}
 \end{array}
\]

In this example we use the error calculus from \cite{NFM13:Loc} to
specify integer overflow with an error status (\code{\mustF}). This error status is verified
and propagated during entailment. Details of the error validation
and localization mechanism are given in \cite{NFM13:Loc}.
Use of logical infinity constants (\code{\infc}) enable us to specify intentional integer overflow
as an explicit error scenario in the method specification.
Another benefit of using an enhanced specification mechanism (with \code{\infc} constants)
is that we can integrate integer overflow checking in an expressive logic like
separation logic. This addresses two major problems
faced by static integer overflow checkers - tracking integer overflow through heap
and integer overflows inside data structures. As an example consider the following method 
which returns the sum of values inside a linked list.

\[
 \begin{array}{l}
 \code{data~\{int~val;node~next\}}\\
 \\
 \code{\view{ll}{\self,sum} {\veq}}
 \code{(\self{=}\nil {\wedge} sum{=}0) {\vee}}\\ 
 \quad \code{\,\exists \, d,q \cdot (\hformn{\self}{node}{d,q}{\sep} \hformp{q}{ll}{rest}) 
 {\wedge} sum {=} d {+}rest {\wedge} sum {<} \infc}\\
 \\
  \code{int~ex4(node~x)}\\
  \quad \code{requires~\view{ll}{x,s}}\\
  \quad \code{ensures~\view{ll}{x,s}{\wedge}res{=}s}\\
   \quad \code{\{}\\
  \quad \quad \code{if~(x~==~null)}\\
  \quad \quad \quad \code{return~0;}\\
  \quad \quad \code{else}\\
  \quad \quad \quad \code{return~x.val+ex4(x.next);}\\
  \quad \code{\}}
 \end{array}
\]

We can specify the linked list using a predicate in 
separation logic (\code{ll}). In the predicate definition we use 
\code{\infc} constant to express the fact that the sum of values in the list 
cannot overflow. With this predicate we can now write the pre/post condition for the function \code{ex4}.
During verification we can now check that the sum of values of linked list
will not lead to an integer overflow. A common security vulnerability that can be 
exploited using Integer overflow is the buffer overrun. The following example (\code{ex5})
shows how two character arrays can be concatenated. We express the bounds on the domain of arrays
using a relation \code{dom}. This program can lead to an integer overflow
if the sum of the size of two arrays is greater than the maximum integer that can be represented
by the underlying architecture. By capturing the explicit condition under which 
this function can lead to an integer overflow (\code{\mustF}) we can verify and prevent 
buffer overrun. We again use the \code{\infc} logical constant in the precondition
to specify the integer overflow.

\[
 \begin{array}{l}
 \code{dom(char[]~c,int~low,int~high)}\\
 \\
 \code{dom(c,low,high){\wedge}low\leq l{\wedge}h\leq high{\implies}dom(c,l,h)}\\
 \\
  \code{int~ex5(ref~char[]~buf1,char[]~buf2,size\_t~len1,size\_t~len2)}\\
  \quad \code{requires~dom(buf1,0,len1){\wedge}dom(buf2,0,len2){\wedge}len1+len2\leq256}\\
  \quad \code{ensures~res=0{\wedge}dom(buf1,0,len1+len2);}\\
  \quad \code{requires~dom(buf1,0,len1){\wedge}dom(buf2,0,len2){\wedge}len1+len2>256}\\
  \quad \code{ensures~res=-1{\wedge}dom(buf1,0,len1);}\\
  \quad \code{requires~dom(buf1,0,len1){\wedge}dom(buf2,0,len2){\wedge}len1+len2>\infc}\\
  \quad \code{ensures~(\true) \mustF;}\\
  \quad \code{\{}\\
  \quad \quad \code{char~buf[256];}\\
  \quad \quad \code{if(len1+len2>256)~return-1;}\\
  \quad \quad \code{memcpy(buf,buf1,len1);}\\
  \quad \quad \code{memcpy(buf+len1,buf2,len2);}\\
  \quad \quad \code{buf1 = buf;}\\
  \quad \quad \code{return~0;}\\
  \quad \code{\}}
 \end{array}
\]

Using \code{\infc} constant as part of the specification language we can represent various 
cases of integer overflows in a concise manner. In this way we also avoid multiple
constants like INT\_MAX, INT\_MIN etc., typically found in header files for various architectures.
When compared to other approaches, this specification
mechanism is more suited to finding integer overflows during verification.

Exploitable vulnerabilities caused by integer overflows can also be prevented by
specifications preventing overflow using \code{\infc}. The following example (taken from \cite{ArtSoftBook})
shows how integer overflow checking can prevent network buffer overrun.
For brevity we show only part of the functions and omit the specification for the method as well.
The function \code{ex6} reads an integer from the network and performs some sanity checks on it. 
First, the length is checked to ensure that it's greater than or equal to zero and, therefore, positive. 
Then the length is checked to ensure that it's less than MAXCHARS. 
However, in the second part of the length check, 1 is added to the length. 
This opens a possibility of the following buffer overrun: A value of INT\_MAX passes the first check 
(because it's greater than 0)
and passes the second length check (as INT\_MAX + 1 can wrap around to a negative value). 
read() would then be called with an effectively unbounded length argument,
leading to a potential buffer overflow situation. This situation can be prevented 
by using the specifications given for the \code{network\_get\_int} method
that ensures that length is always less than \code{\infc}.

\[
 \begin{array}{l}
 \code{int~network\_get\_int(int~sockfd)}\\
 \quad \code{requires~\true}\\
 \quad \code{ensures~res<\infc;}\\
  \code{char*~ex6(int~sockfd)}\\
  \quad \code{\{}\\
  \quad \quad \code{char~buf;}\\
  \quad \quad \code{int~length = network\_get\_int(sockfd);}\\
  \quad \quad \code{if(!(buf = (char *)~malloc(MAXCHARS)))}\\
  \quad \quad \quad \code{die(``malloc: \% m '');}\\
  \quad \quad \code{if(length < 0~||~length + 1 >= MAXCHARS)}\\
  \quad \quad \code{\{}\\
  \quad \quad \quad \code{free(buf);}\\
  \quad \quad \quad \code{die(``bad length: \% d '',value);}\\
  \quad \quad \code{\}}\\
  \quad \quad \code{if(read(sockfd, buf, length) <= 0)}\\
  \quad \quad \code{\{}\\
  \quad \quad \quad \code{free(buf);}\\
  \quad \quad \quad \code{die(``read: \% m '');}\\
  \quad \quad \code{\}}\\
  \quad \quad \code{return~buf;}\\
  \quad \code{\}}
 \end{array}
\]
% int ConcatBuffers(char *buf1, char *buf2, 
% size_t len1, size_t len2) 
% {
% char buf[ ] 0xFF;
% if ((len1 + len2) > 0xFF) return -1;
% memcpy(buf, buf1 len1);
% memcpy(buf+len1, buf2, len2);
% return 0;
% }
% char *read_data(int sockfd)
% {
%     char *buf;
%     int length = network_get_int(sockfd);
% 
%     if(!(buf = (char *)malloc(MAXCHARS)))
%         die("malloc: %m");
% 
%     if(length < 0 || length + 1 >= MAXCHARS){
%         free(buf);
%         die("bad length: %d", value);
%     }
% 
%     if(read(sockfd, buf, length) <= 0){
%         free(buf);
%         die("read: %m");
%     }
% 
%     buf[value] = '\0';
% 
%     return buf;
% }

\section*{Specification Language}\label{sec:sl-infin}
We present the specification language of our system which is extended from 
\cite{SCP12:Chin} with the addition of a constant representing logical infinity. 
The detailed language is depicted in figure \ref{fig.spec_syntax}.
$\pre{\arrimp}\post$ captures a precondition $\pre$ and a postcondition $\post$ of a method or a loop.
They are abbreviated from the standard representation \code{requires} $\pre$ and \code{ensures} $\post$, and
 formalized by separation logic formula  $\constr$.
In turn, the separation logic formula is a disjunction of a heap formula and a pure formula
($\heap{\wedge}\pure$). 
The pure part 
$\pure$ captures a rich constraint from the domains of
Presburger arithmetic (supported by Omega solver \cite{Pugh:CACM}),
monadic set constraint (supported by MONA solver \cite{klarlund-mona})
or polynomial real arithmetic (supported by Redlog solver \cite{Dolzmann:1997:RCA}).
Following the definitions of separation logic  in \cite{Ishtiaq:POPL01,Reynolds:LICS02}, 
the heap part provides notation to denote $\emp$ heap,
 singleton heaps \code{\mapsto}, and disjoint heaps \sep.

The major feature of our system compared to \cite{Ishtiaq:POPL01,Reynolds:LICS02} is the ability
for user to define recursive data structures. Each data structure and its properties
can be defined by an inductive predicate \myit{pred},
that consists of a name \myit{p}, a main
separation formula $\constr$ and an optional pure invariant formula $\pure$
that must hold for every predicate instance. 
In addition to the integer constant $k$ we now support a new 
infinite constant denoted by \code{\infc}. 
This enables us to represent positive and negative infinities by 
\code{\infc} and \code{-\infc} respectively. For the following discussion we assume the existence 
of an entailment prover for separation logic (like \cite{SCP12:Chin})
and a solver for Presburger arithmetic (like \cite{Pugh:CACM}). We now focus only on
integrating automated reasoning with the new infinite constant \code{\infc} inside these existing provers.

\begin{figure}[thb]
\begin{center}
\begin{minipage}{0.6\textwidth}
\begin{frameit}
%\footnotesize
\[
\begin{array}{ll}
\myit{pred} &::= \seppred{\myit{p}}{v^*}~{\veq}~ \constr ~~ [\code{\inv} ~~\pure]\\
\myit{mspec} &::= \pre{\arrimp}\post\\
\constr &::= \bigvee~(\exists w^*{\cdot}\heap{\wedge}\pure)^* \\
\heap &::= \emp ~|~ \sepnode{v}{c}{v^*} ~|~ \seppred{p}{v^*}
~|~ \heap_1 \sep
\heap_2 \hide{~|~ \heap_1 \mw \heap_2}
\\
\pure &::= \atom ~|~ \pure_1{\wedge}\pure_2
\\
\atom &::= \beta ~|~ \neg \beta\\
\beta &::= %\true ~|~ \false ~|~
v_1 {=} v_2 ~|~ v{=}\nil ~|~ a_1{\leq}a_2 ~|~ a_1{=}a_2  %~|~ {\bf \cdots}
\\
\a &::=~ \!\!\begin{array}[t]{l}
  k \mid k{\times}v \mid \a_1+\a_2 \mid -a
  \mid \max(\a_1,\!\a_2)\\
  \mid \myit{min}(\a_1,\!\a_2)
  \mid \code{\infc} %\mid -\infc
      \end{array}
\\
%\D &::= \constr ~|~ \D_1 {\vee}\D_2 ~|~ \D {\wedge}\pure ~|~ \D_1 {\sep}\D_2 ~|~ {\exists} v{\cdot}\D
%\\
% L & ::= \myit{lemma} ~ \seppred{\myit{p}}{v^*}{\wedge} \pure ~{\bowtie}~ \exists w^*{\cdot}(\heap{\wedge}\pure)\flow \\
%\bowtie & ::= ~\rightarrow ~|~ \leftarrow ~|~ \leftrightarrow \\
%\rn   & ::= \{\constr_1, \ldots, \constr_n\}\\
%\ern   & ::= \{\constr'_1, \ldots, \constr'_n\}\\
%\constr' &::= \bigvee~(\exists w^*{\cdot}(\heap{\wedge}\pure)\flow)^* \\
%% \qquad \qquad \cbranch  ::= (\heap{\wedge}\pure)\flow\\
%\flow & ::= \botF ~|~ \mustF ~|~ \okF ~|~ \mayF ~|~ \topF\\
\\
\myit{where} &~  p~\myit{is a predicate name};~ v,w~\myit{are variable names};\\
& ~ c~ \myit{is a data type name};~
k ~ \myit{is an integer
 constant};\\
\end{array}
\]
\caption{The Specification Language}
\label{fig.spec_syntax}
\end{frameit}
\end{minipage}
\end{center}
\end{figure}

An entailment prover for the specification language is used to discharge
proof obligations generated during forward verification.
The entailment checking for separation logic formulas is typically represented \cite{SCP12:Chin} as
follows.

\[
 \constr_1 \vdash \constr_2 , \constr_r
\]

This attempts to prove that
$\constr_1$ entails $\constr_2$ with $\constr_r$ as
its frame (residue) not required for proving $\constr_2$.
This entailment holds, if $\constr_1 \implies \constr_2 \sep \constr_r$.
Entailment provers for separation logic 
deal with the heap part ($\heap$) of the formula and reduce 
the entailment checking to satisfiability queries over the pure part ($\pure$).
We now show how this reasoning can be extended to deal with the new constant
representing infinity (\code{\infc}). A satisfiability check over pure formula with \code{\infc}
is reduced to a satisfiability check over a formula without \code{\infc} which can be 
discharged by using existing solvers (like Omega). In order to eliminate
\code{\infc} from the formula we take help of the equisatisfiable normalization rules shown in
figure \ref{fig.normalize} and proceed as follows.

\[
\begin{array}{l}
SAT(\pure)\\
substitute~equalities(v{=}\code{\infc})\\
\implies SAT([v/\code{\infc}]\pure)\\
normalization \\
\implies SAT(\pure \leadsto \pure_{norm}) \\
elimintate~\code{\infc} \\
\implies SAT([\code{\infc}/v_{\code{\infc}}]\pure_{norm})
\end{array}
\]

We start with substituting any equalities with 
\code{\infc} constants then we apply the normalization rules.
The normalization rules eliminate certain expressions containing 
\code{\infc} based on comparison with integer constants ($k$) 
and variables ($v$). We show rules for both \code{\infc} and \code{-\infc}
in figure \ref{fig.normalize}.
In the normalization rules we use $a_1{\neq}a_2$ as a shorthand for $\neg(a_1{=}a_2)$.
During normalization, we may generate some equalities involving \code{\infc}
(in \rulen{NORM-VAR-INF}).
In that case, we normalize again after substituting the new equalities in the pure formula.
Once no further equalities are generated we eliminate the remaining \code{\infc} constant if any 
by replacing it with a fresh integer variable $v_{\code{\infc}}$ in the pure formula. The pure formula
now does not contain any infinite constants and a satisfiability check on the
formula can now be done using existing methods.

\begin{figure}[thb]
\begin{center}
\begin{minipage}{0.6\textwidth}
\begin{frameit}
 \[
  \begin{array}{c}
 \rulen{NORM-INF-INF}\\
 \code{\infc} {=} \code{\infc} \leadsto \true \\
 \code{\infc} {\neq} \code{\infc} \leadsto \false \\
 \code{\infc} {\leq} \code{\infc} \leadsto \true \\
 \code{\infc} {=} {-\code{\infc}} \leadsto \false \\
 \code{\infc} {\neq} {-\code{\infc}} \leadsto \true \\
 \code{\infc} {\leq} {-\code{\infc}} \leadsto \false \\
 {-\code{\infc}} {=} {-\code{\infc}} \leadsto \true \\
 {-\code{\infc}} {\neq} {-\code{\infc}} \leadsto \false \\
 {-\code{\infc}}{\leq} {-\code{\infc}} \leadsto \true \\
 {-\code{\infc}} {\leq} {\code{\infc}} \leadsto \true 
\end{array}
\qquad
  \begin{array}{c}
  \rulen{NORM-CONST-INF}\\
 k {=} \code{\infc} \leadsto \false \\
 k {\neq} \code{\infc} \leadsto \true \\
 %c {<} \infc \leadsto \true \\
 %c {>} \infc \leadsto \false \\
 k {\leq} \code{\infc} \leadsto \true \\
 \code{\infc} {\leq} k \leadsto \false \\
 k {=} {-\code{\infc}} \leadsto \false \\
 k {\neq} {-\code{\infc}} \leadsto \true \\
 k {\leq} {-\code{\infc}} \leadsto \false \\
 {-\code{\infc}} {\leq} k \leadsto \true 
\end{array} 
\]
\[
 \begin{array}{c}
  \rulen{NORM-VAR-INF}\\
 %v {<} \infc \leadsto v{\neq}\infc \\
 %v {>} \infc \leadsto \false \\
 v {\leq} \code{\infc} \leadsto \true \\
 \code{\infc} {\leq} v \leadsto v {=} \code{\infc} \\
 v {\leq} {-\code{\infc}} \leadsto v {=} {-\code{\infc}}\\
  {-\code{\infc}} {\leq} v \leadsto \true 
 %v {\geq} \infc \leadsto v {=} \infc\\
\end{array}
\qquad
 \begin{array}{c}
 \rulen{NORM-MIN-MAX}\\
 min(\a,\code{\infc}) \leadsto \a \\
 max(\a,\code{\infc}) \leadsto \code{\infc} \\
 min(\a,{-\code{\infc}}) \leadsto {-\code{\infc}} \\
 max(\a,{-\code{\infc}}) \leadsto \a 
\end{array}  
 \]
\caption{Equisatisfiable Normalization}
\label{fig.normalize}
\end{frameit}
\end{minipage}
\end{center}
\end{figure}

Enriching the specification language with infinite constants is quite useful as it allows users to
specify properties (integer overflows) using \code{\infc} as demostrated in the motivating examples.
The underlying entailment procedure can automatically 
handle \code{\infc} by equisatisfiable normalization.

\section*{Verification with Integer Overflow}\label{sec:sl-io}
Our core imperative language is presented in figure \ref{fig.syntax}.
A program \code{P} comprises of a list of data structure declarations \code{tdecl^*} and a list of method declarations 
\code{meth^*} (we use the superscript \code{^*} to denote a list of elements).
Data structure declaration can be a simple node \code {datat} or a recursive shape predicate declaration 
\code{pred} as shown in figure \ref{fig.spec_syntax}.

\begin{figure}[ht]
\begin{center}
\begin{minipage}{0.7\textwidth}
\begin{frameit}
\[
\begin{array}{ll}
\myit{P} &::= \myit{tdecl}^* ~\myit{meth}^* \\
\myit{tdecl} &::= \myit{datat} ~|~ \myit{pred} \\
\myit{datat} &::= \btt{data} ~c ~\{~ \myit{field}^* ~\} \\
\myit{field} &::= t~v  \\
t &::= c ~|~ \tau  \\
\tau &::= \uint ~|~ \int ~|~ \bool ~|~ \float ~|~ \void \\
\myit{meth} &::= t ~ \myit{mn}~(([\passref]~t~v)^*)~\where~(\myit{mspec})^*~\{e\} \\
e &::= \nil~|~k^{\tau} ~|~ k_1^{[u]\int} + k_2^{[u]\int} ~|~ v  ~|~ v.f ~|~ v {:=} e ~|~ v_1.f {:=} v_2 \\
   & \quad|~ \btt{new}~c(v^*)  ~|~e_1;e_2 ~|~ t~v;~e
       ~|~ \myit{mn}(v^*)\\
       &\quad |~\myif{v}{e_1}{e_2} ~ |~ \mywhile{v}{e}{(\myit{mspec})^*}
\end{array}
\]
\caption{A Core Imperative Language}\label{fig.syntax}
\end{frameit}
\end{minipage}
\end{center}
\end{figure}

A method is declared with a prototype, its body \code{e}, and multiple specification \code{mspec^*}.
The prototype comprises a method return type, method name and method's formal parameters. 
The parameters can be passed by {\em value} or by {\em reference} with keyword \code{\passref}
and their types can be primitive $\tau$ or user-defined \code{c}.
A method's body consists of a collection of statements.
We provide basic statements for accessing and modifying shared data structures and for explicit allocation
 of heap storage. It includes:
\begin{enumerate}
\item {\em Allocation} statement: \code {new~c(v^*)}
\item {\em Lookup} statement: For simplifying the presentation but without loss of expressiveness, we just provide
 one-level lookup statement \code{v.f} rather than \code{v.f_1.f_2}.
\item {\em Mutation} statement: \code{v_1.f := v_2}
\end{enumerate}

In addition we provide core statements of an imperative language, such as 
semicolon statement \code {e_1;e_2},
 function call \code{mn(v^*)}, conditional statement \code{if~{v}~{e_1}~{e_2}}, and loop statement 
\code {while~{v}~{e}~{(\myit{mspec})^*}}. Note that for simplicity, we just allow boolean variables (but not expression)
 to be used as the test conditions for conditional statements 
and loop statement must be annotated with invariant through \code{mspec^*}.
To illustrate some of the basic operations on integers in the language
we also show the addition operation between two integers (unsigned and signed) in figure \ref{fig.syntax}
as \code{k_1^{[u]\int} + k_2^{[u]\int}}.

We now present the modifications needed to do forward verification with interger overflow.
The core language used by our system is a C-like imperative language described in 
figure \ref{fig.syntax}.
The complete set of forward verification rules are as given in \cite{SCP12:Chin}.
We use $P$ to denote the program being checked. With the pre/post conditions
declared for each method in P, we can now apply modular verification to its body
using Hoare-style triples $\vdash\{\D_1\}e\{\D_2\}$. We expect $\D_1$ to be given before computing
$\D_2$ since the rules are based on a forward verifier. 
To capture proof search,
we generalize the forward rule to the form $\vdash\{\D_1\}e\{\rn\}$ where $\rn$
is a set of heap states, discovered by a search-based verification process \cite{SCP12:Chin}. When $\rn$
is empty, the forward verification is said to have failed for $\D_1$ as prestate.
As most of the forward verification rules 
are standard \cite{Nguyen:VMCAI07}, we only provide 
those for method verification and method call. 
Verification of a method starts with each
precondition, and proves that the corresponding postcondition is
guaranteed at the end of the method. The verification is formalized
in the rule \code{\verirulen{METH}}:

\begin{itemize}
\item function \myit{prime(V)} returns \sm{\{v' \mid v \in V\}}. 
\item predicate \myit{nochange(V)} returns \sm{\bigwedge_{ v {\in} V}
(v=v')}. If \sm{V=\{\}}, \myit{nochange(V)=true}. 
\item \sm{\exists W
\cdot \rn} returns \sm{\{ \exists W \cdot \rn_i | \rn_i \in \rn \}}. 
\end{itemize}

\[
\hlr{METH}{ V{=}\{v_m..v_n\} \quad  \myit{W}{=}\primeV(V) \\
\forall i =1,..,p ~\cdot~(~ \htriple{\pre^i{\wedge}\nochange(V)}{e}{\rn^i_1}  \\%\qquad \\ \qquad\qquad\qquad
 \qquad \entailVV{(\exists
\myit{W}{\cdot}\rn^i_1)\,}{\post^i}{\rn^i_2} \qquad
\rn^i_2{\neq}\{\}) }{\
t_0~\myit{mn}((\passref~t_j~v_j)_{j{=}1}^{m{-}1},(t_j~v_j)_{j{=}m}^n)~\{\requires~\pre^i~~\ensures~\post^i\}_{i=1}^p~\{e\} }
%t_0~\myit{mn}(t_1\,v_1,..,t_n\,v_n)~\{\requires~\pre^i~~\ensures~\post^i\}_{i=1}^p~\{e\}}
\]

At a method call, each of the method's precondition is checked, \sm{\entailVV{\D}{\rho\pre^i}{\rn_i}}, where \sm{\rho}
represents a substitution of \sm{v_j} by \sm{v'_j}, for all
\sm{j=1,..,n}. The
combination of the residue \sm{\rn_i} and the postcondition is added to
the poststate. If a precondition is not entailed by the program state
\sm{\D}, the corresponding residue is not added to the set of states.
The test \sm{\rn{\neq}\{\}} ensures that at least one precondition is satisfied.
Note that we use the primed notation for denoting the latest value of a variable.
Correspondingly, 
\sm{[v'_0/v_i]}
is a substitution that replaces  the value \sm{v_i} with the latest value of \sm{v'_0}. 

\[
\hlr{CALL}{%\rho{=}[v_i{\mapsto}v'_i]_{i=1}^n ~~ W{=}\{v_1,..,v_n\} ~~   \quad V{=}W{-}R ~~
t_0~\myit{mn}((\passref~t_j~v_j)_{j{=}1}^{m{-}1},(t_j~v_j)_{j{=}m}^n)~\{\requires~\pre^i~~\ensures~\post^i\}_{i=1}^p~\{e\} \in \myit{P} \\
%t_0~m(t_1~v_1,..,t_n~v_n)~ \{\requires~\pre^i~~\ensures~\post^i\}_{i=1}^p~\{e\} \in \myit{P} \\
 \rho{=}[v'_j/v_j]_{j=m}^n \qquad  \entailVV{\D}{\rho\pre^i}{\rn_i}\quad \forall i {=}1,..,p \\
\rn = \bigcup^p_{i{=}1} \post^i \sep \rn_i
\qquad \rn\,{\neq}\,\{\} } {\htriple{\D}{mn(v_1..v_n)}{\rn}}
\]

In order to integrate integer overflow checking with automated verification we first translate the 
basic operations in the core language (like integer addition)
to method calls to specific functions which do integer overflow checking. In this paper
we illustrate the verification using only the addition overflow, however similar translations can be done
for other operators like multiplication \cite{MSR2009:Moy} etc.. The addition
operation for unsigned integers \code{k_1^{\uint} + k_2^{\uint}} is translated to the method \code{uadd}
whose specification is given below.

\[
 \begin{array}{l}
  \code{int~uadd(uint~k_1,uint~k_2)}\\
  \quad \code{requires~k_1{+}k_2{>}\infc}\\ 
  \quad \code{ensures~(true)\mustF;}\\
  \quad \code{requires~k_1{+}k_2{\leq}\infc}\\
  \quad \code{ensures~res{=}k_1{+}k_2;}\\
 \end{array}
\]

The addition of unsigned integers overflows when their sum is greater
than \code{\infc}. The case of signed integer overflow has several cases. 
We translate addition of signed integers to the method \code{add}.
The specification of the \code{add} method 
covers all the cases for signed integer overflow as detailed in \cite{ISSRE10:Dannenberg}.

\[
 \begin{array}{l}
  \code{int~add(int~k_1,int~k_2)}\\
  \quad \code{requires~k_1{>}0{\wedge}k_2{>}0{\wedge}k_1{+}k_2{>}\infc}\\
  \quad \qquad \qquad \code{\vee k_1{>}0{\wedge}k_2{\leq}0{\wedge}k_1{+}k_2{<}{-\infc}}\\
  \quad \qquad \qquad \code{\vee k_1{\leq}0{\wedge}k_2{>}0{\wedge}k_1{+}k_2{<}{-\infc}}\\
  \quad \qquad \qquad \code{\vee k_1{\leq}0{\wedge}k_2{\leq}0{\wedge}k_1{+}k_2{<}{\infc}{\wedge}k_1{\neq}0}\\
  \quad \code{ensures~(true)\mustF;}\\
  \quad \code{requires~k_1{>}0{\wedge}k_2{>}0{\wedge}k_1{+}k_2{\leq}\infc}\\
  \quad \qquad \qquad \code{\vee k_1{>}0{\wedge}k_2{\leq}0{\wedge}k_1{+}k_2{\geq}{-\infc}}\\
  \quad \qquad \qquad \code{\vee k_1{\leq}0{\wedge}k_2{>}0{\wedge}k_1{+}k_2{\geq}{-\infc}}\\
  \quad \qquad \qquad \code{\vee k_1{\leq}0{\wedge}k_2{\leq}0{\wedge}(k_1{+}k_2{\geq}{-\infc}{\vee}k_1{=}0)}\\
  \quad \code{ensures~res{=}k_1{+}k_2;}\\
 \end{array}
\]

The specification of these methods (\code{uadd} and \code{add}) uses the infinite
constants (\code{\infc} and \code{-\infc}) from the enriched specification
language given in the previous section. An expressive specification language 
reduces the task of integer overflow checking to just specifying and verifying of appropriate
methods.
After translation of basic operators into method calls, during forward verification
the \code{\verirulen{CALL}} rule will ensure that we check each operation for integer overflow.
Thus a simple encoding of basic operators and translation of the source program before verification 
enables us to do integer overflow checking along with automated verification.

\section*{Soundness}\label{sec:soundness}
In this section we outline the soundness properties of our entailment 
procedure with infinities and the forward verifier with 
integer overflow checking. We assume the soundness of the underlying
entailment checker and verifier \cite{SCP12:Chin}.

\begin{lemma} 
\emph{(Equisatisfiable Normalization)}\\
\label{lemma:norm}
If $\pure \leadsto \pure_{norm}$ 
then $SAT(\pure) \implies SAT(\pure_{norm})$ 
and $SAT(\pure_{norm}) \implies SAT(\pure)$ 
\end{lemma}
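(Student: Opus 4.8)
The statement to prove is that the normalization relation $\pure \leadsto \pure_{norm}$ preserves satisfiability in both directions. The plan is to prove this by induction on the structure of the derivation $\pure \leadsto \pure_{norm}$, reducing to a single-step analysis where we verify that each individual rewrite rule in Figure~\ref{fig.normalize} preserves satisfiability, and then compose these single steps.

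First I would fix the semantic setting. Since the infinite constants \code{\infc} and \code{-\infc} are logical constants denoting positive and negative infinity on the extended number line, I would interpret a pure formula $\pure$ containing these constants over the extended domain $\mathbb{Z} \cup \{\code{\infc}, {-\code{\infc}}\}$, where the comparison and arithmetic atoms are given their natural extended-order semantics (e.g. $k \le \code{\infc}$ holds for every integer $k$, $\code{\infc} \le k$ fails, $\min(\a,\code{\infc}) = \a$, and so on). Under this reading, $\sat{\pure}$ means there is an assignment to the ordinary variables (ranging over $\mathbb{Z} \cup \{\code{\infc}, {-\code{\infc}}\}$ as appropriate) making $\pure$ true. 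The key reduction is then to check, rule by rule, that the left-hand side atom and the right-hand side are \emph{logically equivalent} under this semantics, not merely equisatisfiable: for instance $\code{\infc} \le v \leadsto v{=}\code{\infc}$ holds because the only extended value at least as large as $\code{\infc}$ is $\code{\infc}$ itself, and $\min(\a,\code{\infc}) \leadsto \a$ holds because $\code{\infc}$ is the top element. Local logical equivalence of each rewrite immediately gives that any satisfying assignment of $\pure$ satisfies $\pure_{norm}$ and conversely, which is strictly stronger than what the lemma asks.

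Second, I would handle the compositional structure. Because normalization rewrites subformulas in place (rewriting an atom $\beta$ or an expression $\a$ occurring inside a larger conjunction, negation, or arithmetic term), I need a congruence argument: if $\pure_1 \leadsto \pure_1'$ by rewriting a subterm, then substituting the equivalent subterm into any surrounding context $\pure_1{\wedge}\pure_2$, $\neg\beta$, or $\min/\max$ expression preserves the truth value of the whole formula. This follows from the standard replacement-of-equivalents property, once local equivalence of each primitive rule is established. The multi-step case (where \rulen{NORM-VAR-INF} generates a new equality $v{=}\code{\infc}$ that is then substituted and re-normalized) is handled by transitivity of logical equivalence along the chain of single steps, together with the soundness of substituting an equality into the formula.

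The main obstacle I anticipate is being careful about the \rulen{NORM-VAR-INF} rules and the equality-substitution step, where the rewrite is \emph{not} a pure local equivalence on an isolated atom but rather depends on propagating a newly-discovered equality ($\code{\infc}{\le}v \leadsto v{=}\code{\infc}$) back through the rest of the formula before re-normalizing. I would need to argue that substituting $[v/\code{\infc}]$ (equivalently, constraining $v$ to equal $\code{\infc}$) is satisfiability-preserving given that the atom forced this equality, and that the eventual replacement of any surviving $\code{\infc}$ by a fresh integer variable $v_{\code{\infc}}$ (the final elimination step described before the lemma) does not change satisfiability---here one checks that once all atoms mentioning $\code{\infc}$ have been normalized away, the remaining occurrences are unconstrained and may safely range over $\mathbb{Z}$. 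The rest of the rules are routine case checks that I would not grind through individually, merely asserting that each is an extended-order tautology.
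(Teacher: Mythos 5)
Your proposal is correct, and at its core it performs the same rule-by-rule case analysis as the paper, but it organizes the argument differently and proves something strictly stronger. The paper's proof merely walks through one representative rule from each group (\rulen{NORM{-}INF{-}INF}, \rulen{NORM{-}CONST{-}INF}, \rulen{NORM{-}VAR{-}INF}, \rulen{NORM{-}MIN{-}MAX}), checks at the level of an isolated atom that satisfiability is preserved in both directions (e.g.\ $SAT(\code{\infc} \leq v) \iff SAT(\false \vee \code{\infc} {=} v)$), and dismisses the remaining rules with ``similarly''; it never fixes the extended-domain semantics explicitly, and it says nothing about rewriting a subterm inside a surrounding conjunction, negation, or nested $\min/\max$, nor about chaining multiple steps. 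You instead establish that each rewrite is a logical \emph{equivalence} over $\mathbb{Z} \cup \{\code{\infc}, -\code{\infc}\}$ and lift it through contexts by replacement-of-equivalents and through derivations by transitivity; this supplies exactly the compositionality that the paper's atom-level checks tacitly assume, so your version is the more complete proof. Your two flagged worries are also well placed: the equality propagation after \rulen{NORM{-}VAR{-}INF} and the final replacement of any surviving \code{\infc} by a fresh integer variable $v_{\code{\infc}}$ are both handled by the paper only in the prose preceding the lemma, not in its proof, and you correctly observe that the $v_{\code{\infc}}$ elimination is \emph{not} a local equivalence but only an equisatisfiability step requiring that the residual occurrences of \code{\infc} be unconstrained after normalization --- a point the paper glosses over entirely. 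In short, you reach the same conclusion by a strengthened route (local equivalence plus congruence and induction on the derivation) where the paper offers only representative per-rule satisfiability sketches.
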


\begin{proof}
We sketch the proof for each normalization rule given in
figure \ref{fig.normalize}.

case $\rulen{NORM-INF-INF}$: From the first rule we get,  
$SAT(\code{\infc} = \code{\infc}) \equiv \true $\\
and the normalization gives $\code{\infc} = \code{\infc} \leadsto \true$, since $SAT(\true) \equiv \true$\\
we have, $SAT(\code{\infc} = \code{\infc}) \implies SAT(\code{\infc} = \code{\infc} \leadsto \true)$\\
and $SAT(\code{\infc} = \code{\infc} \leadsto \true) \implies SAT(\code{\infc} = \code{\infc})$.\\
Hence the normalization preserves satisfiability of pure formulas. We
can prove the other rules in $\rulen{NORM-INF-INF}$ similarly.

case $\rulen{NORM-CONST-INF}$: From the first rule we get,
$SAT(k = \code{\infc}) \equiv \false $\\
and the normalization gives $k = \code{\infc} \leadsto \false$, since $SAT(\false) \equiv \false$\\
we have, $SAT(k = \code{\infc}) \implies SAT(k = \code{\infc} \leadsto \false)$\\
and $SAT(k = \code{\infc} \leadsto \false) \implies SAT(k = \code{\infc})$.\\
Hence the normalization preserves satisfiability of pure formulas. We
can prove the other rules in $\rulen{NORM-CONST-INF}$ similarly.

case $\rulen{NORM-VAR-INF}$: We sketch the proof for the following rule,\\
$SAT(\code{\infc} \leq v)$\\
$\iff SAT(\code{\infc} < v \vee \code{\infc} = v)$\\
$\iff SAT(\false \vee \code{\infc} = v)$\\
we have, $SAT(\code{\infc} \leq v) \implies SAT(\code{\infc} \leq v \leadsto v = \code{\infc})$ \\
and $SAT(\code{\infc} \leq v \leadsto v = \code{\infc}) \implies SAT(\code{\infc} \leq v)$\\
Hence the normalization preserves satisfiability of pure formulas.
Other rules from $\rulen{NORM-VAR-INF}$ can be proven similarly.

case $\rulen{NORM-MIN-MAX}$: We sketch the proof for the following rule,\\
$SAT(max(a,\code{\infc}))$\\
$\iff SAT((a > \code{\infc} \implies a) \vee (a \leq \code{\infc} \implies \code{\infc}))$\\
$\iff SAT((\false \implies a) \vee (a \leq \code{\infc} \implies \code{\infc}))$\\
$\iff SAT((\true) \vee (a \leq \code{\infc} \implies \code{\infc}))$\\
$\iff SAT(a \leq \code{\infc} \implies \code{\infc})$\\
$\iff SAT(\true \implies \code{\infc})$\\
$\iff SAT(\code{\infc})$\\
we have, $SAT(max(a,\code{\infc})) \implies SAT(max(a,\code{\infc}) \leadsto \code{\infc})$ \\
and $SAT(max(a,\code{\infc}) \leadsto \code{\infc}) \implies SAT(max(a,\code{\infc}))$\\
Hence the normalization preserves satisfiability of pure formulas.
Other rules from $\rulen{NORM-MIN-MAX}$ can be proven similarly.
\qed
\end{proof}

\begin{lemma}
\emph{(Soundness of Integer Overflow Checking)}\\
\label{lemma:ioc}
If the program $e$ has an integer overflow ($\mustF$) then,\\
with forward verification $\vdash\{\D_1\}e\{\rn\}$, we have $(\true)\mustF \in \rn$
\end{lemma}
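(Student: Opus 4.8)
The plan is to proceed by structural induction on the program expression $e$, using the translation of primitive arithmetic operators into method calls together with the \code{\verirulen{CALL}} rule to track the \mustF{} status through forward verification. The key observation is that an integer overflow in $e$ must originate at some primitive addition $k_1{+}k_2$ (on signed or unsigned operands), which the front-end rewrites into a call to \code{add} (respectively \code{uadd}) before verification begins. It therefore suffices to establish two facts: first, that the overflowing operation itself produces an output state containing $(\true)\mustF$; and second, that this error status is never discarded as verification continues past the offending operation up to the final residue $\rn$.

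For the base case I would focus on the program point at which the verifier encounters the translated overflowing call. Let $\D$ be the state entailed at that point. Since the operation genuinely overflows, the concrete operand values satisfy the overflow condition, so $\D$ entails the disjunctive precondition $\pre$ of the error branch of \code{add}/\code{uadd} --- precisely the branch whose postcondition is $(\true)\mustF$. Here I rely on the specifications being complete: for unsigned addition the single guard $k_1{+}k_2{>}\infc$ captures overflow, while for signed addition the four-way sign-and-magnitude split follows the exhaustive analysis of \cite{ISSRE10:Dannenberg}. Applying \code{\verirulen{CALL}}, the entailment $\entailVV{\D}{\rho\pre}{\rn_i}$ succeeds for this branch, so $\post \sep \rn_i = (\true)\mustF \sep \rn_i$ is one of the disjuncts gathered into $\rn$, yielding $(\true)\mustF \in \rn$. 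Crucially, the soundness of the $\infc$ reasoning inside these entailments is exactly what Lemma~\ref{lemma:norm} supplies: the satisfiability queries over the overflow guards are discharged equisatisfiably after $\infc$ is eliminated.

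For the inductive step I would show that the \mustF{} status is preserved by every forward rule that may surround the overflowing operation --- sequential composition $e_1;e_2$, conditionals, and the enclosing method body handled by \code{\verirulen{METH}}. This rests on the propagation behaviour of the error calculus of \cite{NFM13:Loc}: once a state carrying $(\true)\mustF$ enters the state set, subsequent rule applications either thread it through unchanged or combine it monotonically, so it survives into the final $\rn$. Sequential composition is the representative case, where the poststate of $e_1$ becomes the prestate of $e_2$ and the error-annotated disjunct is carried forward.

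The main obstacle I anticipate is making the completeness half of the base case rigorous: namely, that \emph{every} concrete overflowing execution is covered by the logical guards phrased with $\infc$, rather than by architecture-specific constants such as INT\_MAX/INT\_MIN. This requires pinning down the intended correspondence between the logical constant $\infc$ and the representable bound of the underlying machine integers, and then verifying that the case analysis in the \code{add} specification leaves no combination of signs and magnitudes unguarded. Once that correspondence is fixed, the remainder reduces to the routine propagation argument and a direct appeal to Lemma~\ref{lemma:norm}.
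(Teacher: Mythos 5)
Your proposal is correct and takes essentially the same route as the paper: the paper's own proof is only a two-sentence sketch that appeals to exactly the three ingredients you develop, namely the translation of primitive operators into the \code{add}/\code{uadd} method calls, Lemma~\ref{lemma:norm} for discharging the \code{\infc} overflow guards, and the soundness of the error calculus of \cite{NFM13:Loc} for propagating \mustF{} into the final residue \rn. Your elaboration of the base case via \code{\verirulen{CALL}} and of the propagation step, together with your flagged concern about the exhaustiveness of the overflow guards (which the paper handles only by citing \cite{ISSRE10:Dannenberg}), simply makes explicit what the paper leaves implicit.
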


\begin{proof}
Provided all basic operators on integers in the program are translated to method calls that
check for integer overflows. The soundness of integer overflow checking 
follows from lemma \ref{lemma:norm} and the 
soundness of error calculus \cite{NFM13:Loc}. \qed
\end{proof}

The soundness of heap entailment
and forward verification with separation logic based
specifications is already established in \cite{SCP12:Chin}.
Lemma \ref{lemma:norm} establishes that the normalization rules indeed
preserve the satisfiability of pure formulas. Lemma \ref{lemma:ioc} then
shows that the integer overflow checking with forward verification is 
sound. If the program has an integer overflow the forward verification
with integer overflow checking detects it.

\section*{Experiments}
\label{sec:impl}
We have implemented our approach in an OCaml prototype called \code{HIPioc}
\footnote{Available at \url{http://loris-5.d2.comp.nus.edu.sg/SLPAInf/SLPAInf.ova} (md5sum 4afb66d65bfa442726717844f46eb7b6)}
evaluate  automated verification using logical infinities (\code{\infc})
we created benchmark of several programs that use infinite constants as
sentinel values in searching and sorting. As an example the following predicate definition of 
a sorted linked list
uses \code{\infc} in the base case to express that the minimum value in 
an empty list is infinity. 

\[
 \begin{array}{l}
 \code{data~\{int~val;node~next\}}\\
 \\
 \code{\view{Sortedll}{\self,min} {\veq}}
 \code{(\self{=}\nil {\wedge} min{=}\code{\infc}) {\vee}}\\ 
 \quad \code{\,\exists \, q \cdot (\hformn{\self}{node}{min,q}{\sep} \hformp{q}{Sortedll}{minrest}) 
 {\wedge} min {<} minrest}
 \end{array}
\]

In addition, \code{HIPioc} allows us to do integer overflow checking of programs during verification.
We have run \code{HIPioc} on several existing verification benchmarks which
contain different kinds of programs. 
The benchmarks include many examples of programing manipulating 
complex heap, arrays, concurrency (barriers  and variable permissions)
and some programs taken from real software (SIR). The results are shown in the table below.
$Sorting (with~\infc)$ are the programs which use \code{\infc} as sentinel value in predicate 
definitions as described above and do not contain integer overflows. Comparing the times between \code{HIPioc} 
and previous version we see that the verification with \code{\infc} in  general
adds some overhead.

\[
\begin{array}[t]{|c|c|c|c|c|c|c|}
\hline
Benchmark & LOC & Num~of & Time & Time & Integer & False \\
Programs & (Total) & Programs & (Secs) & (HIPioc) & Overflows & Positives\\
\hline
Sorting (with~\infc) & 282 & 4 & 5.45 & 5.42 & 0 & 0\\
Arrays & 1432 & 21 & 47.92 & 76.65 & 1 & 0\\
HIP/SLEEK~$\cite{SCP12:Chin}$ & 5779 & 42 & 56.15 & 78.80 & 4 & 0\\
Imm~$\cite{OOPSLA11:David}$ & 2069 & 11 & 120.82 & 126.61 & 18 & 0\\
VPerm~$\cite{ICFEM12:Khanh}$ & 778 & 14 & 3.43 & 3.46 & 3 & 0\\
Barriers~$\cite{LMCS12:Cristian}$ & 1281 & 10 & 60.54 & 60.83 & 16 & 0\\
SIR~$\cite{NFM13:Loc}$ & 2616 & 4 & 34.64 & 41.73 & 1 & 1\\
\hline
Total & 14237 & 106 & 328.95 & 393.5 & 43 & 1\\
\hline
\end{array}
\]

In total we found 43 integer overflows in these programs. Since
these programs were already verified by using automated provers we 
notice that integer overflows are prevalent even in verified software.
As our verification system is based on over approximation (sound but not complete), 
it can lead to false positives when finding integer overflows. In practice,
we see that only 1 example in the experiments we conducted contained a false positive.
The time taken to do verification of programs without and with integer overflow checking shows that
our technique can be applied to do integer overflow checking of programs during 
verification with modest overhead.

\section*{Related Work}
\label{sec:related}
There has been considerable interest in recent years to detect and prevent
integer overflows in software \cite{Cotroneo12}. Dietz et al. \cite{Dietz:ICSE12} present a study of integer
overflows in C/C++ programs. They find out that intentional and unintentional integer overflows
are prevalent in system software. 
Integer overflows often lead to 
exploitable vulnerabilities in programs \cite{Chirstey:SANS11}.
In this paper we presented a method to detect
unintentional integer overflows and provided a mechanism to specify intentional integer overflows.
Program transformations \cite{Coker:ICSE13} can be used to guide the programmer and 
aid in refactoring the source code to avoid
integer overflows. Our focus is on specification of intentional integer overflows which
helps make the conditions under which the program may use an integer overflow explicit.
It also aids in automated verification as such cases can be validated as 
error scenarios for the program.

Most existing techniques for detecting integer overflows are focused on dynamic checking and 
testing of programs \cite{Molnar:USENIX2009,NDSS2009:Wang,ARES2009:Chen,NDSS2007:Brumley}. Dynamic analysis suffers from the path explosion problem and although several improvements in constraints solving have been proposed \cite{sharma2012critical, sharma2013empirical} the approach cannot guarantee the absence of integer overflows. 
There are not many verification or static analysis tool that can do integer overflow checking.
KINT \cite{OSDI2012:Wang} is a static analysis tool which can detect integer overflows 
by solving constraints generated
from source code of programs. 
Another static analysis based approach by Moy et al. \cite{MSR2009:Moy} uses the Z3 solver to 
do integer overflow checking as part of the PREFIX tool \cite{PREFIX}. A certified prover for presburger arithmetic extended with positive and negative infinities has been described in \cite{CRAV2015, sharma2015certified}. 

Our focus in this paper is on integrating integer overflow checking
with program verification to improve the reliability of verified software.
Dynamic techniques  may not explore all paths in the programs while static techniques suffer form loss of precision in tracking integer overflows.
Our specification mechanism allows us to integrate integer overflow checking inside a
prover for specification logic. This allows us to track integer overflows through
the heap and inside various data structures. The benefit of this integration is that
we can detect numeric integer overflow errors in programs with complex sharing
and heap manipulation.

\section*{Conclusion}
\label{sec.conc}
Integer overflows are a major source of errors in programs.
Most verification systems do not focus on the underlying numeric operations on integers
and do not handle integer overflow checking.
%\wnsay{is it true most do not support. I think WHY supports it? how is ours different from others?}
We presented a technique to do integer overflow checking of programs during verification.
Our specification mechanism also allows expressing intentional uses of integer overflows.
We implemented a prototype of our proposal inside an existing verifier and found
real integer overflow bugs in benchmarks of verified software.

\bibliography{all}

\begin{thebibliography}{}

\bibitem[Brumley et~al., 2007]{NDSS2007:Brumley}
Brumley, D., Song, D.~X., cker Chiueh, T., Johnson, R., and Lin, H. (2007).
\newblock Rich: Automatically protecting against integer-based vulnerabilities.
\newblock In {\em NDSS}.

\bibitem[Bush et~al., 2000]{PREFIX}
Bush, W.~R., Pincus, J.~D., and Sielaff, D.~J. (2000).
\newblock A static analyzer for finding dynamic programming errors.
\newblock {\em Softw. Pract. Exper.}, 30(7):775--802.

\bibitem[Chen et~al., 2009]{ARES2009:Chen}
Chen, P., Wang, Y., Xin, Z., Mao, B., and Xie, L. (2009).
\newblock Brick: A binary tool for run-time detecting and locating
  integer-based vulnerability.
\newblock In {\em Availability, Reliability and Security, 2009. ARES '09.
  International Conference on}, pages 208--215.

\bibitem[Chin et~al., 2011]{OOPSLA11:Chin}
Chin, W.-N., David, C., and Gherghina, C. (2011).
\newblock A hip and sleek verification system.
\newblock In {\em OOPSLA Companion}, pages 9--10.

\bibitem[Chin et~al., 2012]{SCP12:Chin}
Chin, W.-N., David, C., Nguyen, H.~H., and Qin, S. (2012).
\newblock Automated verification of shape, size and bag properties via
  user-defined predicates in separation logic.
\newblock {\em Sci. Comput. Program.}, 77(9):1006--1036.

\bibitem[Christey et~al., 2011]{Chirstey:SANS11}
Christey, S., Martin, R.~A., Brown, M., Paller, A., and Kirby, D. (2011).
\newblock 2011 {CWE/SANS Top 25 Most Dangerous Software Errors}.
\newblock Technical report, MITRE Corporation.
\newblock http://cwe.mitre.org/top25.

\bibitem[Coker and Hafiz, 2013]{Coker:ICSE13}
Coker, Z. and Hafiz, M. (2013).
\newblock Program transformations to fix c integers.
\newblock In {\em Proceedings of the 2013 International Conference on Software
  Engineering}, ICSE 2013.

\bibitem[Cotroneo and Natella, 2012]{Cotroneo12}
Cotroneo, D. and Natella, R. (2012).
\newblock Monitoring of aging software systems affected by integer overflows.
\newblock In {\em Software Reliability Engineering Workshops (ISSREW), 2012
  IEEE 23rd International Symposium on}, pages 265--270.

\bibitem[Dannenberg et~al., 2010]{ISSRE10:Dannenberg}
Dannenberg, R.~B., Dormann, W., Keaton, D., Seacord, R.~C., Svoboda, D.,
  Volkovitsky, A., Wilson, T., and Plum, T. (2010).
\newblock As-if infinitely ranged integer model.
\newblock In {\em Proceedings of the 2010 IEEE 21st International Symposium on
  Software Reliability Engineering}, ISSRE '10, pages 91--100, Washington, DC,
  USA. IEEE Computer Society.

\bibitem[David and Chin, 2011]{OOPSLA11:David}
David, C. and Chin, W.-N. (2011).
\newblock Immutable specifications for more concise and precise verification.
\newblock In {\em OOPSLA}, pages 359--374.

\bibitem[Dietz et~al., 2012]{Dietz:ICSE12}
Dietz, W., Li, P., Regehr, J., and Adve, V. (2012).
\newblock Understanding integer overflow in c/c++.
\newblock In {\em Proceedings of the 2012 International Conference on Software
  Engineering}, ICSE 2012, pages 760--770, Piscataway, NJ, USA. IEEE Press.

\bibitem[Dolzmann and Sturm, 1997]{Dolzmann:1997:RCA}
Dolzmann, A. and Sturm, T. (1997).
\newblock Redlog: computer algebra meets computer logic.
\newblock {\em SIGSAM Bull.}, 31:2--9.

\bibitem[Dowd et~al., 2006]{ArtSoftBook}
Dowd, M., McDonald, J., and Schuh, J. (2006).
\newblock {\em Art of Software Security Assessment, The: Identifying and
  Preventing Software Vulnerabilities}.
\newblock Addison-Wesley Professional.

\bibitem[Hobor and Gherghina, 2012]{LMCS12:Cristian}
Hobor, A. and Gherghina, C. (2012).
\newblock Barriers in concurrent separation logic: Now with tool support!
\newblock {\em Logical Methods in Computer Science}, 8(2).

\bibitem[Ishtiaq and O'Hearn, 2001]{Ishtiaq:POPL01}
Ishtiaq, S. and O'Hearn, P. (2001).
\newblock {BI as an assertion language for mutable data structures}.
\newblock In {\em ACM POPL}, pages 14--26, London.

\bibitem[Klarlund and Moller, 2001]{klarlund-mona}
Klarlund, N. and Moller, A. (2001).
\newblock {MONA Version 1.4 - User Manual. BRICS Notes Series}.

\bibitem[Le et~al., 2012]{ICFEM12:Khanh}
Le, D.-K., Chin, W.-N., and Teo, Y.~M. (2012).
\newblock Variable permissions for concurrency verification.
\newblock In {\em ICFEM}, pages 5--21.

\bibitem[Le et~al., 2013]{NFM13:Loc}
Le, Q.~L., Sharma, A., Craciun, F., and Chin, W.-N. (2013).
\newblock Towards complete specifications with an error calculus.
\newblock In {\em NASA Formal Methods}.

\bibitem[Molnar et~al., 2009]{Molnar:USENIX2009}
Molnar, D., Li, X.~C., and Wagner, D.~A. (2009).
\newblock Dynamic test generation to find integer bugs in x86 binary linux
  programs.
\newblock In {\em Proceedings of the 18th conference on USENIX security
  symposium}, SSYM'09, pages 67--82, Berkeley, CA, USA. USENIX Association.

\bibitem[Moy et~al., 2009]{MSR2009:Moy}
Moy, Y., Bjørner, N., and Sielaff, D. (2009).
\newblock Modular bug-finding for integer overflows in the large: Sound,
  efficient, bit-precise static analysis.
\newblock Technical report, Microsoft Research.

\bibitem[Nguyen et~al., 2007]{Nguyen:VMCAI07}
Nguyen, H., David, C., Qin, S., and Chin, W. (2007).
\newblock {Automated Verification of Shape And Size Properties via Separation
  Logic}.
\newblock In {\em VMCAI}, pages 251--266.

\bibitem[Pugh, 1992]{Pugh:CACM}
Pugh, W. (1992).
\newblock The {O}mega {T}est: A fast practical integer programming algorithm
  for dependence analysis.
\newblock {\em Communications of the ACM}, 8:102--114.

\bibitem[Reynolds, 2002]{Reynolds:LICS02}
Reynolds, J. (2002).
\newblock {Separation Logic: A Logic for Shared Mutable Data Structures}.
\newblock In {\em IEEE LICS}, pages 55--74.

\bibitem[Sharma, 2012]{sharma2012critical}
Sharma, A. (2012).
\newblock A critical review of dynamic taint analysis and forward symbolic
  execution.
\newblock Technical report, National University of Singapore.

\bibitem[Sharma, 2013]{sharma2013empirical}
Sharma, A. (2013).
\newblock An empirical study of path feasibility queries.
\newblock {\em arXiv preprint arXiv:1302.4798}.

\bibitem[Sharma, 2015]{CRAV2015}
Sharma, A. (2015).
\newblock {\em Certified Reasoning for Automated Verification}.
\newblock PhD thesis, National University of Singapore.

\bibitem[Sharma et~al., 2015]{sharma2015certified}
Sharma, A., Wang, S., Costea, A., Hobor, A., and Chin, W.-N. (2015).
\newblock Certified reasoning with infinity.
\newblock In {\em International Symposium on Formal Methods}, pages 496--513.
  Springer.

\bibitem[Wang et~al., 2009]{NDSS2009:Wang}
Wang, T., Wei, T., Lin, Z., and Zou, W. (2009).
\newblock Intscope: Automatically detecting integer overflow vulnerability in
  x86 binary using symbolic execution.
\newblock In {\em NDSS}.

\bibitem[Wang et~al., 2012]{OSDI2012:Wang}
Wang, X., Chen, H., Jia, Z., Zeldovich, N., and Kaashoek, M.~F. (2012).
\newblock Improving integer security for systems with kint.
\newblock In {\em Proceedings of the 10th USENIX conference on Operating
  Systems Design and Implementation}, OSDI'12, pages 163--177, Berkeley, CA,
  USA. USENIX Association.

\end{thebibliography}

\end{document}